\begin{document}
\newtheorem{lem}{Lemma}[section]
\newtheorem{prop}{Proposition}[section]
\newtheorem{cor}{Corollary}[section]
\numberwithin{equation}{section}
\newtheorem{thm}{Theorem}[section]
\theoremstyle{remark}
\newtheorem{example}{Example}[section]
\newtheorem*{ack}{Acknowledgment}
\theoremstyle{definition}
\newtheorem{definition}{Definition}[section]
\theoremstyle{remark}
\newtheorem*{notation}{Notation}
\theoremstyle{remark}
\newtheorem{remark}{Remark}[section]
\newenvironment{Abstract}
{\begin{center}\textbf{\footnotesize{Abstract}}%
\end{center} \begin{quote}\begin{footnotesize}}
{\end{footnotesize}\end{quote}\bigskip}
\newenvironment{nome}
{\begin{center}\textbf{{}}%
\end{center} \begin{quote}\end{quote}\bigskip}

\newcommand{\triple}[1]{{|\!|\!|#1|\!|\!|}}
\newcommand{\xx}{\langle x\rangle}
\newcommand{\ep}{\varepsilon}
\newcommand{\al}{\alpha}
\newcommand{\be}{\beta}
\newcommand{\de}{\partial}
\newcommand{\la}{\lambda}
\newcommand{\La}{\Lambda}
\newcommand{\ga}{\gamma}
\newcommand{\del}{\delta}
\newcommand{\Del}{\Delta}
\newcommand{\sig}{\sigma}
\newcommand{\ome}{\omega}
\newcommand{\Ome}{\Omega}
\newcommand{\C}{{\mathbb C}}
\newcommand{\N}{{\mathbb N}}
\newcommand{\Z}{{\mathbb Z}}
\newcommand{\R}{{\mathbb R}}
\newcommand{\Rn}{{\mathbb R}^{n}}
\newcommand{\Rnu}{{\mathbb R}^{n+1}_{+}}
\newcommand{\Cn}{{\mathbb C}^{n}}
\newcommand{\spt}{\,\mathrm{supp}\,}
\newcommand{\Lin}{\mathcal{L}}
\newcommand{\SSS}{\mathcal{S}}
\newcommand{\F}{\mathcal{F}}
\newcommand{\xxi}{\langle\xi\rangle}
\newcommand{\eei}{\langle\eta\rangle}
\newcommand{\xei}{\langle\xi-\eta\rangle}
\newcommand{\yy}{\langle y\rangle}
\newcommand{\dint}{\int\!\!\int}
\newcommand{\hatp}{\widehat\psi}
\renewcommand{\Re}{\;\mathrm{Re}\;}
\renewcommand{\Im}{\;\mathrm{Im}\;}

\title[Max-Min characterization of the mountain pass]%
{{Max-Min characterization of the mountain pass energy level\\ 
for a class of variational problems}}

\begin{abstract}
We provide a  max-min characterization of the mountain pass
 energy level
for a family of variational problems. As a consequence we deduce
the mountain pass structure of solutions to 
suitable  PDEs, whose existence follows from 
classical minimization argument.
\end{abstract}

\author{}

\author{Jacopo Bellazzini and Nicola Visciglia}

\address{Jacopo Bellazzini\\Dipartimento di Matematica Applicata 
Universit\`a di Pisa\\
Via Buonarroti 1/C, 56127 Pisa, Italy}

\email{j.bellazzini@ing.unipi.it}

\address{Nicola Visciglia\\
Dipartimento di Matematica Universit\`a di Pisa\\
Largo B. Pontecorvo 5, 56100 Pisa, Italy}

\email{viscigli@dm.unipi.it}

\maketitle
\date{}
In the literature the existence of solutions
for variational PDE is often reduced to the existence of critical points
of functionals $F$  having the following structure
$$F(u)=T(u)-U(u)$$
with $u$ belonging to a Banach space $X$
and $T,U$ that satisfy suitable conditions.\\
A classical strategy which is very useful in order to find critical points for the functional $F$ is to look at the following minimization problem
$$\min_{U(u)=1}T(u)$$
and eventually to remove the Lagrange multiplier that appears 
by using suitable invariances of the problem. 
On the other hand when the functional $F$ shows a mountain pass geometry
it is customary to look at critical points of the unconstrained functional $F(u)$ on the whole space $X$. More specifically it is well known that a candidate critical value is given by the mountain pass energy level
\begin{equation}\label{c}
c:=inf_{\eta \in \Gamma} sup_{t\in [0,1]} F(\eta (t))
\end{equation}
where
\begin{equation}\label{pathset}
\Gamma:=\{\eta \in C([0,1]; X |\eta(0)=0, F(\eta(1))<0\}.
\end{equation}
A natural question is to understand whether or not the two forementioned
approaches provide the same solution.\\
Of course in both approaches described above the main difficulty
is related with the eventual lack of compactness respectively for the
minimizing sequences and for the Palais Smale sequences.
The main contribution of this paper is to give a max-min characterization of the mountain pass value $c$ described above under a general framework.
Roughly speaking we relate the existence of a regular path of minimizers
for
$$\min_{U(u)=\lambda}T(u),\  \lambda \in \R^+$$
with the mountain pass energy level.
As a byproduct
we shall  show
the mountain pass structure of solutions obtained via minimization approach 
to a family of PDEs with subcritical and critical Sobolev exponent. 
Finally we mention the papers \cite{JT} and \cite{OM} where the question mentioned above is studied for specific PDEs. We shall recover those results
as a consequence of our general topological result stated below.\\
\\
In order to give a concrete example in which for instance  it is not  
easy  to reveal the mountain pass structure of solutions obtained
via minimization, we consider the following equation:
\begin{equation}\label{tufello}
-\Delta_p u = |u|^{p^*-2}u \text{ on } \R^n \text{ with } 1<p<n  \text{ and } p^*=\frac{np}{n-p}
\end{equation}
It is well known since \cite{T} that the best constant is achieved
in the following continuous embedding 
$$\mathcal{D}^{1,p}(\R^n)\subset L^{p^*}(\R^n)$$
and hence via rescaling argument there is a nontrivial solution of \eqref{tufello}. As a consequence of our next abstract theorem one can deduce easily
that any solution constructed as above via a minimization procedure
is a mountain pass solution. \\ 
\\
We fix some notations. For every $\lambda \in \R$ we introduce the sets
$${\mathcal U}_\lambda:=
\left \{u\in X| U(u)=\lambda\right \}$$
and 
$$i_\lambda:=\inf_{u\in {\mathcal U}_\lambda} T(u).$$
Finally we introduce $${\mathcal M}_\lambda
:=\left \{u \in {\mathcal U}_\lambda| T(u)
=i_\lambda\right \}$$ (notice that 
it could even be the empty set).
\begin{thm}\label{astratto}
Let $X$ be a Banach space and $F=T-U$
with
\begin{equation}\label{0}
T,U\in C(X,\R), T(u)\geq 0, T(0)= U(0)=0,
\end{equation}
\begin{equation}\label{1}
\lim_{\lambda \rightarrow 0^+} i_{\lambda}=0
\end{equation}
\begin{equation}\label{2}
\lambda^*>0 \text{ where } \lambda^*:=\inf \{\lambda\in(0, \infty) \text{ such that }i_{\lambda}-\lambda \leq 0 \}
\end{equation}
\begin{equation}\label{3}
\lambda^{**}>0 \text{ where } \lambda^{**}:=\inf \{\lambda\in(0, \infty) \text{ such that }i_{\lambda}-\lambda < 0 \}
\end{equation}
\begin{equation}\label{4}
{\mathcal M}_\lambda\neq \emptyset \ \forall \lambda\in\R^+
\end{equation}
\begin{eqnarray}\label{5}
 &\hbox{ there exists a continuous map } \gamma :[0,\infty)
\rightarrow X\\
 \nonumber&\hbox{ such that } \gamma(\lambda)\in {\mathcal M}_\lambda.
\end{eqnarray}
Then $c>0$ and 
$$\max_{\lambda\in (0, \lambda^{**})} \left(\min_{u \in \mathcal{U}_{\lambda}}F(u)\right)=\max_{\lambda\in (0, \lambda^{**})} F(\gamma(\lambda))=c$$
where $c$ is defined in \eqref{c}.
\end{thm}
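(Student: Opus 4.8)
The plan is to reduce everything to a single real function of $\lambda$ and then prove two inequalities for $c$. First I would observe that for $u\in\mathcal U_\lambda$ one has $U(u)=\lambda$, hence $F(u)=T(u)-\lambda$; by \eqref{4} the minimum of $T$ on $\mathcal U_\lambda$ is attained, so
$$\min_{u\in\mathcal U_\lambda}F(u)=i_\lambda-\lambda=F(\gamma(\lambda)),$$
the last equality because $\gamma(\lambda)\in\mathcal M_\lambda$ by \eqref{5}. This already gives the first equality in the statement and shows that the common value is $g(\lambda):=i_\lambda-\lambda$. Moreover $g=F\circ\gamma$ is continuous on $[0,\infty)$, so from now on I only need to prove $\max_{\lambda\in(0,\lambda^{**})}g(\lambda)=c$.

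Next I would record the elementary properties of $g$. Since $0\in\mathcal U_0$ and $T(0)=0$ one has $i_0=0$, so $g(0)=0$, and $g(\lambda)\to0$ as $\lambda\to0^+$ by \eqref{1}. By the definition of $\lambda^{**}$ in \eqref{3}, $g\ge0$ on $(0,\lambda^{**})$, and continuity together with the infimum characterization forces $g(\lambda^{**})=0$; by \eqref{2} we have $g>0$ on $(0,\lambda^*)$ with $\lambda^*>0$. Hence $c_0:=\max_{\lambda\in[0,\lambda^{**}]}g(\lambda)$ is strictly positive and is attained at some interior point $\bar\lambda\in(0,\lambda^{**})$ (the endpoints give $0$); in particular $c_0=\max_{\lambda\in(0,\lambda^{**})}g(\lambda)>0$, which will yield $c>0$ once the equality $c=c_0$ is established.

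For the lower bound $c\ge c_0$ I would use the intermediate value theorem on $t\mapsto U(\eta(t))$. Fix any $\eta\in\Gamma$. Then $U(\eta(0))=U(0)=0$, while $F(\eta(1))<0$ gives $T(\eta(1))<U(\eta(1))$, so $U(\eta(1))>0$ and $g(U(\eta(1)))<0$; by the definition of $\lambda^{**}$ this forces $U(\eta(1))\ge\lambda^{**}>\bar\lambda$. Since $t\mapsto U(\eta(t))$ is continuous and runs from $0$ to a value exceeding $\bar\lambda$, there is $t^*$ with $U(\eta(t^*))=\bar\lambda$, whence
$$\sup_{t\in[0,1]}F(\eta(t))\ge F(\eta(t^*))=T(\eta(t^*))-\bar\lambda\ge i_{\bar\lambda}-\bar\lambda=g(\bar\lambda)=c_0.$$
Taking the infimum over $\eta\in\Gamma$ gives $c\ge c_0$.

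For the upper bound $c\le c_0$ I would use the path $\gamma$ itself. Because $g(\lambda^{**})=0<c_0$ and $g$ is continuous, I can pick $\Lambda$ slightly larger than $\lambda^{**}$ with $g(\Lambda)<0$ (such $\Lambda$ exist by the definition of $\lambda^{**}$) and with $g\le c_0$ on all of $[0,\Lambda]$; reparametrising $\gamma|_{[0,\Lambda]}$ to $[0,1]$ produces a curve along which $F=g\le c_0$ and whose endpoint satisfies $F(\gamma(\Lambda))=g(\Lambda)<0$. The only point to fix is that this curve starts at $\gamma(0)\in\mathcal M_0$ rather than at the origin, so it must be prepended with a connection from $0$ to the start of the $\gamma$–curve, and controlling $F$ along that connection is the main obstacle. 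I would connect $0$ to $\gamma(\lambda_1)$ by the straight segment $t\mapsto t\,\gamma(\lambda_1)$ and let $\lambda_1\to0^+$: since $0\in\mathcal M_0$ the natural normalisation is $\gamma(0)=0$, in which case $\gamma(\lambda_1)\to0$, the segment shrinks to the origin, and $\sup_tF(t\,\gamma(\lambda_1))\to F(0)=0$ by continuity, so the concatenated path lies in $\Gamma$ with $\sup F\le c_0$ for $\lambda_1$ small. This gives $c\le c_0$, and combined with the lower bound and the first paragraph yields $c=c_0=\max_{\lambda\in(0,\lambda^{**})}\big(\min_{u\in\mathcal U_\lambda}F(u)\big)=\max_{\lambda\in(0,\lambda^{**})}F(\gamma(\lambda))>0$. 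The delicate point throughout is precisely this initial connection: if $\gamma(0)\ne0$ the shrinking–segment estimate must be replaced by an argument keeping $F\le c_0$ along a path joining $0$ to $\gamma(0)$ inside the sublevel set $\{F\le c_0\}$, which is where the real work lies.
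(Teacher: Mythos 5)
Your argument is correct and follows essentially the same route as the paper: reduce everything to the continuous function $\lambda\mapsto i_\lambda-\lambda$, get the lower bound $c\ge\max_{(0,\lambda^{**})}(i_\lambda-\lambda)$ by applying the intermediate value theorem to $t\mapsto U(\eta(t))$ along an arbitrary $\eta\in\Gamma$, and get the upper bound by restricting $\gamma$ to $[0,\Lambda]$ with $\Lambda$ slightly beyond $\lambda^{**}$ so that $F(\gamma(\Lambda))<0$ while $F\circ\gamma$ stays below the max. The one point you flag as unresolved --- that $\gamma(0)\in\mathcal{M}_0$ need not be the origin, so the reparametrized $\gamma$ might fail the condition $\eta(0)=0$ in the definition of $\Gamma$ --- is glossed over in the paper as well, which tacitly takes $\gamma(0)=0$ (as indeed holds in both of its applications), so your write-up is no less complete than the original.
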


\begin{remark}\label{rem1}
Notice that if we assume moreover that the functional $F$ is $C^1(X)$,
then the classical deformation lemma implies that
$\gamma(\bar \lambda)$ is critical point of $F$, where
$\max_{\lambda\in (0, \lambda^{**})} F(\gamma(\lambda))=F(\gamma(\bar \lambda))$.
\end{remark}
\begin{remark}
The hypothesis \eqref{5} concerns a continuous selections of minima
with respect to the parameter $\lambda$. Typically this is the hardest condition
to be concretely checked. However when some invariance of the
variational problem is available it is possible to prove \eqref{5}.
\end{remark}

Next we give two concrete applications
of our general result, based on the following two types of invariances: 
\begin{equation}\label{resc}
\text{  {\em rescaling} } u: \rightarrow u(\frac{x}{\beta})
\end{equation}
\begin{equation}\label{contr-exp}
\text{ {\em contraction/ expansion }} u: \rightarrow \beta u(x).
\end{equation}
The model equation  for {\em rescaling} is given  by
\begin{equation}\label{sfe1}
-\Delta_p u-\mu \frac{1}{|x|^p}|u|^{p-2}u= g(u) \hbox{ on } \R^n, n\geq 3\\
\end{equation}
which is the Euler-Lagrange equation corresponding to the following functional
$$W^{1,p}(\R^n) \ni u\rightarrow F(u)=\frac{1}{p} \left(\int |\nabla u|^P-\mu \frac{1}{|x|^p}|u|^pdx\right)-\int G(u)dx$$
Next we fix the following specific framework:
\begin{eqnarray}\label{setting1}
& X=W^{1,p}(\R^{n})\\
\nonumber & T(u)=\frac{1}{p} \left(\int |\nabla u|^P-\mu \frac{1}{|x|^p}|u|^pdx\right) \ \text{ with } 0 \leq \mu <(\frac{n-p}{p})^p\\
\nonumber & U(u)=\int G(u)dx
\end{eqnarray}
Following the same arguments as in \cite{BL} for $p=2$
and \cite{OM} for $p \neq 2$ it is easy to deduce that \eqref{4} holds in the specific context given in \eqref{setting1} provided that $1<p<N$ and the nonlinearity $g$ fulfills
\begin{equation}\label{11}
g(s)\in {\mathcal C}(\R, \R) 
\hbox{ is continuous and odd; }
\end{equation}
\begin{equation}\label{12}
-\infty <\liminf_{s\rightarrow 0}
\frac{g(s)}s
\leq \limsup_{s\rightarrow 0}
\frac{g(s)}s<0;\end{equation} 
\begin{equation}\label{13}
\lim_{s\rightarrow \infty}
\frac{g(s)}{s^{p^*-1}}=0 \text{ where } p^*=\frac{np}{n-p}
\end{equation}
\begin{equation}\label{14}
\hbox{ there exists }
\xi_0>0 \hbox{ s.t. } G(\xi_0)>0.
\end{equation}
Next corollary concerns the mountain pass structure of the solutions
obtained by scaling the minimizers described above. Let us emphasize that in next corollary $F$, $c$, $\mathcal{M}_1$ and  $i_1$ are defined as in theorem \ref{astratto}
in the concrete case given by \eqref{setting1}. 
\begin{cor}\label{cor1}
Let $1<p<n$, $G$ satisfies \eqref{11}-\eqref{12}-\eqref{13}-\eqref{14}, $v\in \mathcal{M}_1$ and
$$\bar \lambda =(i_1)^{\frac{n}{p}}(\frac{n-p}{p})^{\frac{n}{p}}$$
Then 
$u=v(\frac{x}{\bar{\lambda}^{\frac{1}{n}}})$ is a solution of \eqref{sfe1}
whose energy level is given by the mountain pass energy level $c$. Moreover
the path $\gamma:(0, \infty)\rightarrow X$ given by
$$\lambda\rightarrow v(\frac{x}{\lambda^{\frac{1}{n}}})$$ satisfies 
$\max_{\lambda \in(0,\infty)} F(\gamma(\lambda))=c$.
\end{cor}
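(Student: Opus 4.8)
The plan is to read Corollary~\ref{cor1} as a direct application of Theorem~\ref{astratto} in the concrete setting~\eqref{setting1}, the point being that the rescaling invariance~\eqref{resc} makes every hypothesis explicitly checkable and reduces the abstract max-min identity to an elementary one-variable maximization. First I would record the two scaling identities. Writing $u_\beta(x)=v(x/\beta)$ and changing variables $x=\beta y$, one gets $U(u_\beta)=\beta^n U(v)$, while both the gradient term and the Hardy term carry the homogeneity $\beta^{n-p}$, so $T(u_\beta)=\beta^{n-p}T(v)$. Setting $\beta=\lambda^{1/n}$ this says that the dilation $v\mapsto v(\cdot/\lambda^{1/n})$ maps $\mathcal U_1$ bijectively onto $\mathcal U_\lambda$ and multiplies $T$ by $\lambda^{(n-p)/n}$; hence $i_\lambda=\lambda^{(n-p)/n}i_1$ and $\mathcal M_\lambda=\{\,v(\cdot/\lambda^{1/n}):v\in\mathcal M_1\,\}$. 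In particular the explicit path $\gamma(\lambda)=v(\cdot/\lambda^{1/n})$ lands in $\mathcal M_\lambda$.

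With these identities the hypotheses of Theorem~\ref{astratto} become bookkeeping. Hardy's inequality with the sharp constant $(\tfrac{n-p}{p})^p$ together with $0\le\mu<(\tfrac{n-p}{p})^p$ gives $T(u)\ge \tfrac1p\bigl(1-\mu(\tfrac{p}{n-p})^p\bigr)\int|\nabla u|^p\,dx\ge 0$, which is the nontrivial part of~\eqref{0} ($T(0)=U(0)=0$ being immediate since $G(0)=0$); the same lower bound forces $i_1>0$, because a competitor with $T=0$ must be constant, hence $0$, contradicting $U=1$. Condition~\eqref{1} follows from $i_\lambda=\lambda^{(n-p)/n}i_1$ and $(n-p)/n>0$. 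For~\eqref{2}-\eqref{3} I compute $i_\lambda-\lambda=i_1\lambda^{(n-p)/n}-\lambda$, positive for $\lambda<i_1^{n/p}$ and negative for $\lambda>i_1^{n/p}$, so $\lambda^*=\lambda^{**}=i_1^{n/p}>0$. Condition~\eqref{4} at $\lambda=1$ is exactly the constrained existence result proved under~\eqref{11}-\eqref{14} in \cite{BL,OM}, and it propagates to all $\lambda$ by the bijection above; condition~\eqref{5} holds since $\lambda\mapsto v(\cdot/\lambda^{1/n})$ is continuous from $(0,\infty)$ into $W^{1,p}(\R^n)$ for fixed $v$.

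Theorem~\ref{astratto} then yields $c>0$ and $c=\max_{\lambda\in(0,\lambda^{**})}F(\gamma(\lambda))$. Along the path $F(\gamma(\lambda))=T(\gamma(\lambda))-U(\gamma(\lambda))=i_1\lambda^{(n-p)/n}-\lambda=:f(\lambda)$, a concave profile that vanishes at $0^+$, tends to $-\infty$, and has a unique interior maximizer found from $f'(\lambda)=0$, namely $\bar\lambda=\bigl(\tfrac{(n-p)i_1}{n}\bigr)^{n/p}$. Since $\tfrac{n-p}{n}<1$ we have $\bar\lambda<\lambda^{**}=i_1^{n/p}$, so $\bar\lambda$ also maximizes $f$ over all of $(0,\infty)$; this gives at once the final assertion $\max_{\lambda\in(0,\infty)}F(\gamma(\lambda))=f(\bar\lambda)=c$.

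It remains to identify $\gamma(\bar\lambda)$ with a solution of~\eqref{sfe1}. As $v\in\mathcal M_1$ minimizes $T$ under $U=1$, the Lagrange rule gives $-\Delta_p v-\mu|x|^{-p}|v|^{p-2}v=\theta\, g(v)$ for some $\theta\in\R$; testing against the infinitesimal dilation $w=-x\cdot\nabla v$ and using $\tfrac{d}{ds}\big|_{s=1}T(v(\cdot/s))=(n-p)i_1$ and $\tfrac{d}{ds}\big|_{s=1}U(v(\cdot/s))=n$ pins down $\theta=(n-p)i_1/n$. A second scaling computation shows that $u=v(\cdot/\beta)$ solves $-\Delta_p u-\mu|x|^{-p}|u|^{p-2}u=\beta^{-p}\theta\, g(u)$, so the choice $\beta^{p}=\theta$, i.e. $\beta^n=\bar\lambda$, removes the multiplier and makes $u$ an honest solution of~\eqref{sfe1}; by construction $u=\gamma(\bar\lambda)$ and $F(u)=f(\bar\lambda)=c$. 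Alternatively, once~\eqref{11}-\eqref{14} are seen to make $F\in C^1(X)$, Remark~\ref{rem1} already gives that the maximizer $\gamma(\bar\lambda)$ is a critical point of $F$. The main obstacle I foresee is exactly this normalization: checking that the Lagrange multiplier produced by the constrained problem is precisely the dilation factor that both solves~\eqref{sfe1} and locates the maximum of $f$; the remaining continuity/compactness issues behind~\eqref{4}-\eqref{5} are routine given \cite{BL,OM} and Hardy's inequality.
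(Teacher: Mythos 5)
Your proposal is correct and follows essentially the same route as the paper: verify the hypotheses of Theorem \ref{astratto} in the setting \eqref{setting1} via the scaling identities $T(v(\cdot/\beta))=\beta^{n-p}T(v)$ and $U(v(\cdot/\beta))=\beta^{n}U(v)$, deduce $i_\lambda=\lambda^{1-\frac pn}i_1$ and $\lambda^{*}=\lambda^{**}=i_1^{n/p}$, and then maximize the one-variable profile $\lambda\mapsto i_1\lambda^{1-\frac pn}-\lambda$; your additional Lagrange-multiplier computation identifying $\gamma(\bar\lambda)$ as an exact solution of \eqref{sfe1} is merely a more explicit version of the paper's appeal to Remark \ref{rem1}. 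One point worth flagging: your maximizer $\bar\lambda=\bigl(\tfrac{(n-p)i_1}{n}\bigr)^{n/p}$ differs from the value $(i_1)^{n/p}(\tfrac{n-p}{p})^{n/p}$ in the statement, and your value is the correct one --- it is the exact analogue of the formula in Corollary \ref{cor2}, whereas the stated value need not even satisfy $\bar\lambda<\lambda^{**}=i_1^{n/p}$ when $n>2p$ --- so the discrepancy is a typo in the corollary, not a gap in your argument.
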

Our second application concerns the {\em contraction-expansion} scale invariance. We consider the equation
\begin{equation}\label{sfe2}
\left\{
\begin{array}{ll}
&-\Delta_p u -\mu |u|^{p-2}u=u|u|^{p^*-2}u \text{ in } \Omega \subset \R^n\ \\
&u=0 \ \text{ on }\partial \Omega
\end{array}\right.
\end{equation}
with
\begin{equation}\label{hyp2}
p^*=\frac{np}{n-p}, \Omega \text{ is open and bounded, }  1<p^2<n 
\end{equation}
\begin{equation*}
0<\mu<\mu_p
\end{equation*}
where
$$\mu_p=\inf_{u\neq 0} \frac{\int |\nabla u|^pdx}{\int |u|^p dx}$$
This equation  is the Euler-Lagrange equation corresponding to the following functional
$$W^{1,p}_0(\Omega) \ni u\rightarrow F(u)=\frac{1}{p} \left(\int |\nabla u|^P-\mu |u|^pdx\right)-\frac{1}{p^*}\int u^{p^*}dx$$
Next we fix the following specific framework:
\begin{eqnarray}\label{setting2}
&X=W^{1,p}_0(\Omega) \\
\nonumber &T(u)=\frac{1}{p} \left (\int |\nabla u|^P-\mu |u|^pdx \right)\\
\nonumber &U(u)=\frac{1}{p^*}\int u^{p^*}dx
\end{eqnarray}
The validity of hypothesis \eqref{4} in this specific framework has been checked in \cite{BN} for $p=2$ and
in \cite{GV} for $p \neq 2$. In the  next corollary $F$, $c$, $\mathcal{M}_1$ and  $i_1$ are defined as in theorem \ref{astratto}
in the concrete case given by \eqref{setting2}.  
\begin{cor}\label{cor2}
Let \eqref{hyp2} holds, $v\in \mathcal{M}_1$ and  
$\bar \lambda=\left(\frac{i_1p}{p^*}\right)^{\frac{p^*}{p^*-p}}$.
Then 
$u=\bar{\lambda}^{\frac{p}{p^*}}v$ is a solution of \eqref{sfe2}
whose energy level is given by the mountain pass energy level. Moreover
the path $\gamma:(0, \infty)\rightarrow X$ given by
$$\lambda\rightarrow \lambda^{\frac{p}{p^*}}v $$ satisfies
$\max_{\lambda \in(0,\infty)} F(\gamma(\lambda))=c$.
\end{cor}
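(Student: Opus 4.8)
The plan is to exploit the fact that the framework \eqref{setting2} is \emph{exactly homogeneous} under the contraction-expansion action $u\mapsto\beta u$, so that every hypothesis of Theorem~\ref{astratto} can be reduced to the single datum $v\in\mathcal{M}_1$ supplied by \cite{BN,GV}; the abstract theorem then yields both the value $c$ and the mountain-pass structure. First I would record the two scaling identities, immediate from \eqref{setting2}: $T(\beta u)=\beta^p\,T(u)$ and $U(\beta u)=\beta^{p^*}U(u)$ for every $\beta>0$. Hence $w\mapsto\beta w$ carries $\mathcal{U}_1$ onto $\mathcal{U}_{\beta^{p^*}}$ while multiplying $T$ by $\beta^p$; taking $\beta=\lambda^{1/p^*}$ gives $i_\lambda=\lambda^{p/p^*}i_1$ and shows that $\gamma(\lambda):=\lambda^{1/p^*}v$ is a \emph{continuous} selection with $\gamma(\lambda)\in\mathcal{M}_\lambda$, establishing \eqref{4} and \eqref{5} at once (hypothesis \eqref{4} for $\lambda=1$ being the sole nontrivial analytic input, imported from \cite{BN,GV}). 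The remaining structural hypotheses are elementary: $0<\mu<\mu_p$ forces $\int|\nabla u|^p-\mu|u|^p\ge(1-\mu/\mu_p)\int|\nabla u|^p\ge0$, so $T\ge0$ and \eqref{0} holds; \eqref{1} follows from $i_\lambda=\lambda^{p/p^*}i_1\to0$; and since the critical Sobolev embedding forces $i_1>0$, the sign analysis of $\lambda\mapsto i_\lambda-\lambda=\lambda\big(\lambda^{(p-p^*)/p^*}i_1-1\big)$ gives $\lambda^*=\lambda^{**}=i_1^{\,p^*/(p^*-p)}>0$, which is \eqref{2} and \eqref{3}. Theorem~\ref{astratto} now applies.

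Next I would carry out the one-variable optimization to which the theorem has reduced us. Along $\gamma$ the energy is the explicit scalar profile $\phi(\lambda):=F(\gamma(\lambda))=i_\lambda-\lambda=\lambda^{p/p^*}i_1-\lambda$; since $p/p^*<1$ we have $\phi'(\lambda)=\tfrac{p}{p^*}i_1\lambda^{p/p^*-1}-1$ strictly decreasing with $\phi'(0^+)=+\infty$ and $\phi'(\infty)=-1$, so $\phi$ is increasing then decreasing with a unique interior maximum at $\tfrac{p}{p^*}\lambda^{p/p^*-1}i_1=1$, i.e. at $\bar\lambda=\big(\tfrac{p\,i_1}{p^*}\big)^{p^*/(p^*-p)}$, which lies in $(0,\lambda^{**})$ because $(p/p^*)^{p^*/(p^*-p)}<1$. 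Theorem~\ref{astratto} identifies this maximal value with $c$; and since $F\in C^1(W^{1,p}_0(\Omega))$ — the only point to verify being that $u\mapsto\int|u|^{p^*}$ is $C^1$, which holds by the continuous (though non-compact) embedding $W^{1,p}_0(\Omega)\hookrightarrow L^{p^*}(\Omega)$ — Remark~\ref{rem1} guarantees that the maximizer $u:=\gamma(\bar\lambda)=\bar\lambda^{1/p^*}v$ is a critical point of $F$, hence a weak solution of \eqref{sfe2}, with $F(u)=c$.

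To make the solution explicit and pin down the scaling exponent I would use a Lagrange multiplier at $v$. Being a constrained minimizer, $v\in\mathcal{M}_1$ satisfies $-\Delta_p v-\mu|v|^{p-2}v=\theta|v|^{p^*-2}v$ for some $\theta$; pairing with $v$ gives $p\,i_1=\theta\,p^*$, so $\theta=\tfrac{p\,i_1}{p^*}$. As the left-hand side is $(p-1)$-homogeneous in $v$ while $|v|^{p^*-2}v$ is $(p^*-1)$-homogeneous, the rescaling $u=\beta v$ cancels the multiplier precisely when $\beta^{p^*-p}=\theta$, i.e. $\beta=\theta^{1/(p^*-p)}=\bar\lambda^{1/p^*}$; thus $u=\bar\lambda^{1/p^*}v$ solves \eqref{sfe2}, in agreement with the critical point obtained above. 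The ``moreover'' is then immediate: the same profile $\phi(\lambda)=\lambda^{p/p^*}i_1-\lambda$ governs $F(\gamma(\lambda))$ on all of $(0,\infty)$, its global maximum being attained at $\bar\lambda\in(0,\lambda^{**})$, whence $\max_{\lambda\in(0,\infty)}F(\gamma(\lambda))=\max_{\lambda\in(0,\lambda^{**})}F(\gamma(\lambda))=c$. I expect the main obstacle to sit entirely in hypothesis \eqref{4}: the existence of a minimizer at the \emph{critical} exponent $p^*$, where the loss of compactness is genuine; this Brezis--Nirenberg-type fact is exactly what is borrowed from \cite{BN,GV}, after which the corollary is pure scaling algebra together with one scalar optimization.
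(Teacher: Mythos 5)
Your proposal is correct and follows essentially the same route as the paper: exploit the homogeneity $T(\beta u)=\beta^{p}T(u)$, $U(\beta u)=\beta^{p^*}U(u)$ to get $i_\lambda=\lambda^{\frac{p}{p^*}}i_1$ and the continuous selection $\gamma(\lambda)=\lambda^{\frac{1}{p^*}}v$, verify the hypotheses of Theorem \ref{astratto}, maximize the scalar profile $\lambda^{\frac{p}{p^*}}i_1-\lambda$ at $\bar\lambda=\left(\frac{pi_1}{p^*}\right)^{\frac{p^*}{p^*-p}}$, and conclude via Remark \ref{rem1}; your additional Lagrange-multiplier cross-check (giving $\theta=\frac{pi_1}{p^*}$ and $\beta=\theta^{\frac{1}{p^*-p}}=\bar\lambda^{\frac{1}{p^*}}$) is a harmless supplement the paper omits. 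Note that your solution $u=\bar\lambda^{\frac{1}{p^*}}v$ matches the minimizing path $\lambda\mapsto\lambda^{\frac{1}{p^*}}v$ used in the paper's own proof, so the exponent $\frac{p}{p^*}$ in the corollary's statement (in $u=\bar\lambda^{\frac{p}{p^*}}v$ and in the displayed path) is evidently a typo that you have, correctly, silently fixed.
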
 
\begin{remark}
Let us emphasize that the mountain pass structure of solutions
to \eqref{sfe2} obtained via the minimzation procedure,can be also deduced
by using a classification theorem
of the Palais Smale sequences of the associated functional (see \cite{GV}).
However we point out that our proof follows by purely 
topological arguments. 
\end{remark}

\section{Proof of theorem \ref{astratto}}
Notice that
$(0, \infty) \ni \lambda \rightarrow I_{\lambda}:= i_{\lambda}-\lambda$
is a continuous function due to \eqref{0} and \eqref{5}.
Let $$I:=\max_{\lambda\in (0, \lambda^{**})}I_{\lambda} \hbox{ and } 
\mathcal{A}=\left\{ \lambda \in (0, \lambda^{**}) \text{ such that } I_{\lambda}=I\right\}.$$
We claim the following fact:
\\
\\
\emph{given $\eta\in \Gamma$ there is $\bar t$ such that $\eta(\bar t)\in \mathcal{U}_{\bar \lambda}$ where $\bar \lambda \in \mathcal{A}$}\\
(recall that $\Gamma$ is defined in \eqref{pathset}).
\\
\\
We show how the claim implies the theorem.
We have 
$$\max_{t \in [0,1]}F(\eta(t))\geq F(\eta(\bar t))=T(\eta(\bar t))- U(\eta(\bar t)) \geq i_{\bar \lambda}-\bar \lambda$$
hence 
\begin{equation}\label{napoli}
c\geq i_{\bar \lambda}-\bar \lambda=I=\max_{\lambda \in (0, \lambda^{**})}F(\gamma (\lambda)) 
\end{equation}
where $\gamma \in \Gamma$ is defined by \eqref{5}.
By definition of $\lambda^{**}$ there is a sequence
$\lambda_n> \lambda^{**}$ such that 
$\lambda_n\rightarrow \lambda^{**}$, 
$F(\gamma(\lambda_n))<0$ and moreover $\lim_{n\rightarrow \infty} F(\gamma(\lambda_n))=0$.
As a consequence there is $\bar n \in \N$ such that $$\sup_{\lambda\in (0, \lambda_{\bar n})} F(\gamma(\lambda))=\sup_{\lambda\in (0, \lambda^{**})} F(\gamma(\lambda)).$$
In particular, after a suitable parametrization, $\gamma:[0, \lambda_{\bar n}] \rightarrow X$ belongs to $\Gamma$ and hence 
$$c\leq \max_{\lambda \in (0, \lambda^{**})}F(\gamma (\lambda)).$$
By combining this fact with \eqref{napoli} we get
$c=\max_{\lambda \in (0, \lambda{**})}F(\gamma (\lambda)).$
\\
In order to prove the claim stated above we notice that
\eqref{2} implies
$$F(u)\geq 0 \hbox{ } \forall u\in \cup_{\lambda\in (-\infty, \lambda^{**}]} 
{\mathcal U}_{\lambda}$$
and since $\eta \in \Gamma$ necessarily
\begin{equation}\label{interm}
\eta(1)\in 
{\mathcal U}_{\lambda} \text{ with } \lambda>\lambda^{**}
\end{equation} 
Next consider 
the continuous function
\begin{equation}\label{contt}
(0, 1) \ni t \rightarrow U(\eta(t))
\end{equation}
By combining the definition of $\Gamma$ with \eqref{interm} we get 
$U(\gamma(0))=0$ and $U(\gamma(1))>\lambda^{**}$, hence by a continuity argument we have the claim since $0\leq \bar \lambda \leq \lambda^{**}$.

\section{Applications}
\begin{proof}[Proof of corollary \ref{cor1}]
By the standard Hardy inequality we have 
\begin{equation}
\left(\frac{n-p}{p}\right)^p \frac{1}{|x|^p}|u|^pdx \leq \int |\nabla u|^pdx
\end{equation}
and hence 
\begin{equation}
\int |\nabla u|^p-\mu \frac{1}{|x|^p}|u|^pdx
\end{equation}
is equivalent to the standard seminorm $ \int |\nabla u|^pdx$
provided that $\mu$ is like in the assumptions. Moreover due to
the  positivity of $\mu$ we can deduce via  rearrangement argument
that the minimizing sequences for $i_1$ can be choosen radially symmetric.
Following the same argument as in \cite{BL} for $p=2$ and more generally
for $p \neq 2$ and \cite{OM},  we have that $\mathcal{M}_1\neq \emptyset$.
In order to prove the corollary \ref{cor1} we check first that the general hypotheses
of the theorem \ref{astratto} are fulfilled in the specific framework
defined in \eqref{setting1}.
Notice that a rescaling argument
shows that
$$i_\lambda=\lambda^{1-\frac{p}{n}} i_1$$
and a family of minimizers for $i_{\lambda}$
is given by $v(\frac{x}{\lambda^{\frac{1}{n}}})$ where $v\in \mathcal{M}_1$.
Hence we have
$$i_{\lambda}-\lambda=\lambda^{1-\frac{p}{n}} i_1-\lambda$$
and thus hypoteses \eqref{1}-\eqref{2}-\eqref{3}-\eqref{5} are fulfilled.
We have
$$\lambda^*=\lambda^{**}=i_1^{\frac{n}{p}}$$
Since $\bar \lambda =(i_1)^{\frac{n}{p}}(\frac{n-p}{p})^{\frac{n}{p}}$ fulfills
$\bar{\lambda}^{1-\frac{p}{n}}i_1 - \bar \lambda=max_{\lambda \in (0, \infty)} \lambda^{1-\frac{p}{n}}i_1 -  \lambda$ 
we can conclude due to remark \ref{rem1} that $v(\frac{x}{\bar \lambda^{\frac{1}{n}}})$
is a mountain pass solution.
\end{proof}
\begin{proof}[Proof of corollary \ref{cor2}]
The fact that $\mathcal{M}_1\neq \emptyset$ is proved in \cite{BN} and
\cite{GV} respectively in the case $p=2$ and $p \neq 2$. The proof follows
 exactly like in the proof of corollary \ref{cor1} once we notice that
by a scaling argument 
$$
i_{\lambda}=\lambda^{\frac{p}{p^*}}i_1
$$
and a path of  minimizer for $i_{\lambda}$ is given by 
$(0,\infty)\ni \lambda \rightarrow \lambda^{\frac{1}{p^*}}v$.
Notice finally that  $\bar \lambda=\left(\frac{i_1p}{p^*}\right)^{\frac{p^*}{p^*-p}}$ fulfills
$\bar{\lambda}^{\frac{p}{p^*}}i_1 - \bar \lambda=max_{\lambda \in (0, \infty)} \lambda^{\frac{p}{p^*}}i_1 -  \lambda$
and then we can conclude as in corollary \ref{cor1}. 
\end{proof}



\date{}




\begin{thebibliography}{1}
\bibitem{BL} H. Beresticky, P.L. Lions, Scalar filed equations I, Arch. rat. Mech. Anal.
82, 313-346 (1983).

\bibitem{BN} H. Brezis, L. Nirenberg, Positive solutions of nonlinear elliptic equations involving critical Sobolev exponents,  Comm. Pure Appl. Math.  36 , no. 4, 437--477  (1983)



\bibitem{JT} L. Jeanjean, K. Tanaka,
A remark on least energy solutions on $\R^n$, Proc. Amer. Math. Soc.
138, 2399-2408, 2002.

\bibitem{GV} M. Guedda, L. Veron, Quasilinear elliptic equations involving critical Sobolev exponents,  Nonlinear Anal.  13  ,  no. 8, 879--902 (1989)

\bibitem{OM} J.M. Do O, E. Medeiros, Remarks on least energy solutions for quasilinear elliptic problems in $\Bbb R\sp N$,  Electron. J. Differential Equations,  No. 83 (2003)

\bibitem{T} G. Talenti, Best constant in Sobolev inequality,  Ann. Mat. Pura Appl. (4)  110, 353--372 (1976)
\end{thebibliography}
\end{document}